\newcommand{\citet}[2][]{\citeauthor{#2} \cite[#1]{#2}}
\newtheorem{lemm}{Lemma}
\newtheorem{thm}{Theorem}
\newtheorem{corollary}{Corollary}
\begin{document}
\title{\vspace{-2cm} Maximum likelihood estimation for aggregated current status data: Simulation study using the illness-death model for chronic diseases with duration dependency}
\date{\today}
\author{%
	Maryam Mohammadi Saem\thanks{%
		Corresponding author: Maryam Mohammadi Saem,\quad Lehrstuhl f\"{u}r Medizinische Biometrie und Epidemiologie (MBE), Fakult\"{a}t f\"{u}r Gesundheit (Department für Humanmedizin), Universit\"{a}t Witten-Herdecke, Alfred-Herrhausen-Straße 50,
		58448 Witten, Germany; email: Maryam.MohammadiSaem@uni-wh.de%
	}\quad and\quad%
	Ralph Brinks\thanks{%
		Ralph Brinks,\quad Head of Lehrstuhl f\"{u}r Medizinische Biometrie und Epidemiologie (MBE), Fakult\"{a}t f\"{u}r Gesundheit (Department für Humanmedizin), Universit\"{a}t Witten-Herdecke, Alfred-Herrhausen-Straße 50,
		58448 Witten, Germany; email: Ralph.Brinks@uni-wh.de%
		}
}
\maketitle
\begin{abstract}
\noindent
	We use the illness-death model (IDM) for chronic conditions to derive a new analytical relation between the transition rates between the states of the IDM. The transition rates are the incidence rate ($i$) and the mortality rates of people without disease ($m_0$) and with disease ($m_1$). For the most generic case, the rates depend on age, calendar time and in case of $m_1$ also on the duration of the disease. In this work, we show that the prevalence-odds can be expressed as a convolution-like product of the incidence rate and an exponentiated linear combination of $i$, $m_0$ and $m_1$. The analytical expression can be used as the basis for a maximum likelihood estimation (MLE) and associated large sample asymptotics.
	In a simulation study according to \cite{brinks2014lexis} where a cross-sectional trial about a chronic condition is mimicked, we estimate the duration dependency of the mortality rate $m_1$ based on aggregated current status data using the ML estimator. For this, the number of study participants and the number of diseased people in eleven age groups are considered. The ML estimator provides reasonable estimates for the parameters including their large sample confidence bounds.

\end{abstract}
{\textbf{Key words:} incidence, prevalence odds, mortality, epidemiology, non-communicable diseseas.}
\\[.65em]

{\parskip=-0.5mm \tableofcontents}

\section{Introduction}
A considerable global health burden revolves around chronic diseases, such as ischemic heart disease, diabetes and dementia. The spread of these diseases is intricately linked to the age of individuals and the duration of the disease since its onset. Additionally, calendar time plays a critical role in understanding the incidence and mortality rates associated with these conditions.

The incidence rate and mortality rate of chronic diseases depend on both calendar time and the age of affected individuals. Furthermore, the mortality rate for those with chronic diseases is contingent upon the duration since the onset of the disease.

The Illness-Death Model (IDM) serves as a valuable tool for studying chronic diseases and comprises three distinct states: \emph{Healthy},  \emph{Diseased}, and \emph{Dead}. The terms \emph{Healthy} and \emph{Diseased} refer to a specific chronic disease, e.g., heart disease or diabetes. Each subject of the target population is partitioned into exactly one of these three states. For simplicity, we assume that apart from birth into the \emph{Healthy} state, the population is closed, i.e., there is no migration. Incidence and prevalence are key metrics for quantifying new cases in the healthy state over a specific period, with prevalence representing the proportion of individuals in the \emph{Diseased} state at a given time. The transitions between these states are characterized by incidence rate ($i$), mortality rate without disease ($m_0$), and mortality with disease ($m_1$). The corresponding time scales for incidence $i$ and mortality $m_0$ without disease are calendar time ($t$) and age ($a$), while mortality rate $m_1$ with disease may depend on an additional time scale - duration of the disease since its onset ($d$). 
The IDM with the transition rates is shown in Figure \ref{IDM}.

\begin{figure}[h!]
	\centering
	\includegraphics[width=14cm]{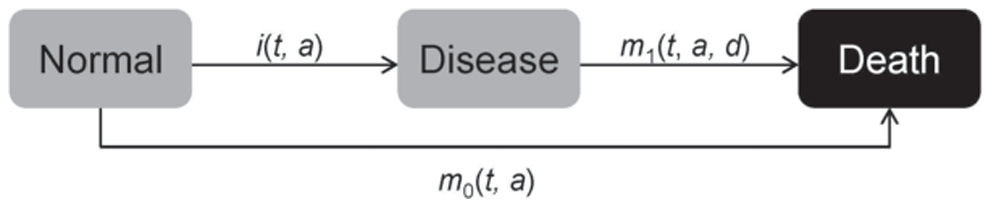}
	\caption {Illness-death model with three states and transition rates $i, m_0$ and $m_1$ depending on time calendar $t$, age $a$ and duration of the disease $d$.}\label{IDM}
\end{figure} 

In some instances, literature may omit the consideration of disease duration $d$. A possible reason for ignoring disease duration is that the stochastic process underlying the IDM looses the Markov property of being memoryless \cite{helwich2008durational}. Age $a$ and calendar time $t$, being important factors for many diseases, can rarely be ignored. Therefore, we consider the most general case, where all three time scales - calendar time, age, and disease duration - will be part of our model.

%
%
%
%
\section{Methods}

First we make analytical considerations and take Keiding's results a bit further. Then we extend Keidings' formula to MLE. Finally, we run a simulation.

\subsection{Keiding's formula for prevalence odds} 
 Let us assume that $S(t,a)$ denotes the absolute number of subjects with age $a$ at time $t$ in the health state and $C(t,a,d)$ the number of cases with age $a$ at time $t$ who are in illness state for duration $d$. The total number of subjects with age $a$ at time $t$ with chronic disease is defined by $C^*(t,a):=\int_{0}^{d}C(t,a,\delta)d\delta$.\\
Notice, in this article we assumed that there is no migration, i.e., no immigration and no emigration. A possible way to extend the theory to the case with migration is given in \cite{brinks2014age}. We also consider that the disease is contracted only after birth. Hence, for all $t$, $C^*(t,0)=0$. And finally we assume that the functions $S$ and $C$ are sufficiently smooth.\\
From the IDM, the following balance equations for the functions $S$ and $C$ will be obtained:
\begin{align}\label{pdes}
	(\partial_t+\partial_a)S(t,a)&=-(m_0(t,a)+i(t,a))S(t,a)\,,\\
	(\partial_t+\partial_a+\partial_d)C(t,a,d)&=-m_1(t,a,d)C(t,a,d)\,,
\end{align}
with the initial conditions:
\begin{align*}
	S(t-a,0)&=S_0(t-a)\,,\\
	C(t,a,0)&=i(t,a)S(t,a)\,.
\end{align*}
 $S(t-a,0)$ indicates the number of healthy newborn people, and $C(t,a,0)$ expresses the number of people who are newly contracted by disease at time $t$ and age $a$. The relation (\ref{pdes}) involves different independent variables and partial derivatives of the unknown function with respect to the independent variables. In mathematics relation (\ref{pdes}) with its initial conditions is called \textit{partial differential equation} or in abbreviated case PDE. \\

\bigskip

The solutions for the PDEs (\ref{pdes}) with above initial conditions are the following equations:
\begin{align}\label{S,C}
	S(t,a)&=S_0(t-a)\exp\Big(-\int_{0}^{a}m_0(t-a+\tau,\tau)+i(t-a+\tau,\tau)d\tau\Big)\,,\\
	\nonumber C(t,a,d)&=C(t-d,a-d,0)\exp\Big(-\int_{0}^{d}m_1(t-d+\tau,a-d+\tau,\tau)d\tau\Big)\,,\\
	\nonumber &=i(t-d,a-d)S(t-d,a-d)\exp\Big(-\int_{0}^{d}m_1(t-d+\tau,a-d+\tau,\tau)d\tau\Big)\,.
\end{align}

The expression for the total number $C^*$ is
\begin{align}\label{Cstar}
	C^*(t,a)&=\int_{0}^{a}i(t-\delta,a-\delta)S(t-\delta,a-\delta)\exp(-\int_{0}^{d}m_1(t-d+\tau,a-d+\tau,\tau)d\tau)d\delta\,.
\end{align}
By using the definition of the age-specific prevalence and applying the equations (\ref{S,C} - \ref{Cstar}), we obtain
\begin{align}
	p(t,a)&=\frac{C^*(t,a)}{S(t,a)+C^*(t,a)}\,.
\end{align}
Brinks and Landwehr \cite{brinks2015new}, have proved the following Lemma:
\begin{lemm}\cite{brinks2015new}
	The total number $C^*$ of diseased persons aged $a\geq 0$ at time $t$, $C^*(t,a)=\int_{0}^{a}C(t,a,\delta) \; d\delta$, is the solution of the initial value problem
	\begin{align*}
		(\partial_t+\partial_a)C^*(t,a)&=-m_1^*(t,a)C^*(t,a)+i(t,a)S(t,a)\,,\\
		C^*(t-a,0)&=0\,,
	\end{align*}
with
	\[
	m_1^*(t,a):=
	\begin{cases}
		\displaystyle\frac{\int_{0}^{a}m_1(t,a,\delta)C(t,a,\delta)d\delta}{\int_{0}^{a}C(t,a,\delta)d\delta} & \text{for}\, C^*(t,a)>0\,,\\
		0 & \text{for}\, C^*(t,a)=0\,.
	\end{cases}
	\]
\end{lemm}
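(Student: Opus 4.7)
The plan is to differentiate $C^*$ directly and then eliminate the mixed partial $(\partial_t+\partial_a)C$ by invoking the PDE for $C$ in~(\ref{pdes}).

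First I would apply Leibniz's rule for differentiation under the integral sign to $C^*(t,a)=\int_0^a C(t,a,\delta)\,d\delta$. Since only the upper limit depends on $a$ and nothing in the limits depends on $t$, this yields
\[ (\partial_t+\partial_a) C^*(t,a) \;=\; C(t,a,a) + \int_0^a (\partial_t+\partial_a) C(t,a,\delta)\,d\delta. \]
This step relies on the smoothness of $C$ assumed at the beginning of Section~2.1.

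Next I would substitute the PDE $(\partial_t+\partial_a+\partial_d)C=-m_1\,C$ into the integrand, rewriting $(\partial_t+\partial_a)C(t,a,\delta) = -m_1(t,a,\delta)\,C(t,a,\delta) - \partial_\delta C(t,a,\delta)$. The last piece is an exact derivative in $\delta$, so the fundamental theorem of calculus collapses $\int_0^a \partial_\delta C(t,a,\delta)\,d\delta$ to $C(t,a,a) - C(t,a,0)$. The boundary term $C(t,a,a)$ cancels the one produced by Leibniz's rule, leaving
\[ (\partial_t+\partial_a)C^*(t,a) \;=\; C(t,a,0) - \int_0^a m_1(t,a,\delta)\,C(t,a,\delta)\,d\delta. \]

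To finish, I would apply the initial condition $C(t,a,0)=i(t,a)\,S(t,a)$ and rewrite the remaining integral as $m_1^*(t,a)\,C^*(t,a)$ directly from the definition of $m_1^*$; the degenerate branch $C^*(t,a)=0$ is handled by the convention $m_1^*=0$, since in that case the integrand vanishes identically. The initial value $C^*(t-a,0)=0$ is immediate because the integration range has length zero. The only real obstacle is justifying Leibniz's rule with a moving endpoint, but this is routine under the stated smoothness, so I expect essentially no technical friction; the proof is really just bookkeeping around a convenient cancellation.
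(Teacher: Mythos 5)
Your argument is correct and is the standard derivation of this lemma: Leibniz's rule with the moving endpoint, substitution of the transport equation for $C$, cancellation of the boundary term $C(t,a,a)$ against $-\int_0^a\partial_\delta C\,d\delta$, and the initial condition $C(t,a,0)=i\,S$. The paper itself does not prove the statement (it is quoted from \cite{brinks2015new}), but your route coincides with the proof given there, including the correct handling of the degenerate branch $C^*(t,a)=0$ via nonnegativity of $C$.
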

In 1991, Keiding \cite{keiding1991age} has given an expression for the age-specific prevalence-odds $\pi = \frac{C^\star}{S}$:

\begin{thm}
	The prevalence-odds $\pi(t,a)$ of those aged $a\geq 0$ at time $t$ can be calculated by 
 
	\begin{align}\label{keiding}
		\pi(t,a)=\frac{\int_{0}^{a}i(t-a+y, y)\mathcal{M}_{t,a}(y)\exp(-\int_y^a m_1(t-a+\tau,\tau,\tau - y) d\tau) dy}
        {\mathcal{M}_{t,a}(a)}\,,
	\end{align}
	where
	\begin{align*}
		\mathcal{M}_{t,a}(y):=\exp\Big(-\int_{0}^{y}m_0(t-a+\tau,\tau)+i(t-a+\tau,\tau)d\tau\Big)\,.
	\end{align*}
\end{thm}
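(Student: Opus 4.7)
The plan is to compute the quotient $\pi(t,a) = C^{\star}(t,a)/S(t,a)$ directly from the explicit formulas (\ref{S,C}) and (\ref{Cstar}) and reduce Keiding's expression to a single change of variables in the integral defining $C^{\star}$.

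First I would observe that the denominator is immediate: the formula for $S(t,a)$ in (\ref{S,C}) is precisely $S(t,a) = S_0(t-a)\,\mathcal{M}_{t,a}(a)$, and more generally $S(t-\delta,a-\delta) = S_0(t-a)\,\mathcal{M}_{t,a}(a-\delta)$, since shifting both arguments by $-\delta$ leaves the birth cohort $t-a$ unchanged and simply cuts the integration interval in the exponent down to $[0,a-\delta]$. This is the key remark that makes the cohort factor $S_0(t-a)$ cancel when one divides $C^{\star}$ by $S$.

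Next I would substitute this identity for $S(t-\delta,a-\delta)$ into the expression (\ref{Cstar}) for $C^{\star}(t,a)$, pull $S_0(t-a)$ out of the integral, and perform the substitution $y = a-\delta$ (so $d\delta = -dy$, with the limits flipping from $\delta \in [0,a]$ to $y \in [0,a]$). Under this change of variables, $i(t-\delta,a-\delta)$ becomes $i(t-a+y,y)$ and $\mathcal{M}_{t,a}(a-\delta)$ becomes $\mathcal{M}_{t,a}(y)$. The delicate step is the inner $m_1$-integral $\int_0^{\delta} m_1(t-\delta+\tau,\,a-\delta+\tau,\,\tau)\,d\tau$, where I would make a second substitution $\sigma = \tau + (a-\delta) = \tau + y$: this turns the integrand argument into $m_1(t-a+\sigma,\,\sigma,\,\sigma - y)$ and the integration range into $\sigma \in [y,a]$, which (after renaming $\sigma$ to $\tau$) reproduces exactly the exponent appearing in (\ref{keiding}).

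Putting the pieces together, $C^{\star}(t,a)$ equals $S_0(t-a)$ times the numerator of (\ref{keiding}), while $S(t,a) = S_0(t-a)\,\mathcal{M}_{t,a}(a)$, so the common factor $S_0(t-a)$ cancels in the ratio and Keiding's formula drops out. I do not expect any real obstacle: the only place demanding care is tracking the three arguments of $m_1$ through the nested substitution $\delta \mapsto y = a - \delta$ and $\tau \mapsto \sigma = \tau + y$, which must be carried out in the correct order so that the duration coordinate becomes $\tau - y$ while the calendar and age coordinates both become $t-a+\tau$ and $\tau$ respectively. No further regularity beyond the smoothness already assumed on $S$ and $C$ is needed, and the boundary term $C^{\star}(t-a,0)=0$ is implicitly used in that no contribution appears from the lower endpoint of the substituted integral.
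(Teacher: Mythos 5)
Your derivation is correct and is exactly the computation that the paper leaves implicit: it states the result as Keiding's and cites \cite{keiding1991age} rather than proving it, but the ingredients you use — $S(t-\delta,a-\delta)=S_0(t-a)\,\mathcal{M}_{t,a}(a-\delta)$, the cancellation of the cohort factor in $\pi=C^\star/S$, and the substitutions $y=a-\delta$ and $\sigma=\tau+y$ in \eqref{Cstar} — are precisely the ones underlying both the cited result and the paper's own Appendix calculation for Theorem \ref{theorem}. The only point worth flagging is that you (correctly) read the inner integral in \eqref{Cstar} as $\int_0^{\delta}m_1(t-\delta+\tau,a-\delta+\tau,\tau)\,d\tau$, i.e.\ with the outer integration variable $\delta$ in place of the $d$ that appears there by typographical slip.
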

By knowing the incidence rate $i$ and mortality rates $m_0$ and $m_1$ and by applying the above Theorem, the prevalence-odds $\pi$ can be calculated at time $t$ and age $a.$ Backtransformation via $p = \frac{\pi}{1+\pi}$ yields the prevalence $p.$ Although Keiding's seminal paper \cite{keiding1991age} was well received within the field of biometry and medical statistics, unfortunately the important formula \eqref{keiding} has rarely been used.\\

\bigskip

More than two decades later, Brinks and Landwehr presented a PDE formula, where the prevalence $p$ is the solution of an initial value problem \cite{brinks2015new}: 
\begin{thm}\cite{brinks2015new}
	The age-specific prevalence $p$ is the solution of the initial value problem
	\begin{align}\label{preralph}
	(\partial_t+\partial_a)p=(1-p)(i-p(m_1^*-m_0))\,,
	\end{align}
	with $p(t,0)=0$.
\end{thm}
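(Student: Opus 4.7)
The plan is to apply the directional operator $\mathcal{D} := \partial_t + \partial_a$ (the derivative along the characteristic birth-cohort lines $t-a = \text{const}$) to the ratio $p = C^*/(S+C^*)$, and then substitute the two transport equations that are already at hand: the PDE for $S$ from (\ref{pdes}) and the PDE for $C^*$ provided by the preceding Lemma. This reduces the statement to a one-line quotient-rule calculation.

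In detail, I would set $N := S + C^*$ so that $p = C^*/N$ and $1-p = S/N$, and apply the quotient rule to obtain $\mathcal{D}p = \bigl((\mathcal{D}C^*)\,N - C^*\,(\mathcal{D}N)\bigr)/N^2$. From (\ref{pdes}) we have $\mathcal{D}S = -(m_0+i)\,S$ and from the Lemma $\mathcal{D}C^* = -m_1^*\,C^* + i\,S$, so that $\mathcal{D}N = -m_0\,S - m_1^*\,C^*$. Substituting into the quotient expression and using the identity $N - C^* = S$ to combine the two $m_1^*$-contributions, the numerator collapses to $iSN + SC^*(m_0 - m_1^*)$. Dividing by $N^2$ and recognising $S/N = 1-p$ and $C^*/N = p$ yields
\begin{equation*}
\mathcal{D}p \;=\; (1-p)\,i + p(1-p)(m_0 - m_1^*) \;=\; (1-p)\bigl(i - p(m_1^* - m_0)\bigr),
\end{equation*}
which is precisely (\ref{preralph}). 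The initial condition $p(t,0)=0$ follows immediately from $C^*(t,0)=0$ (no prevalent cases at birth) together with $S(t,0)=S_0(t)>0$.

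There is no real analytic obstacle here; the argument is algebraic bookkeeping once the two transport equations are available. The only genuinely nontrivial ingredient, namely passing from the duration-dependent $C$-PDE in (\ref{pdes}) to a closed transport equation for the aggregate $C^*$ driven by the duration-averaged mortality $m_1^*$, has already been carried out in the preceding Lemma. In particular, the duration variable $d$ never needs to appear explicitly in the proof of this theorem, which is exactly what makes (\ref{preralph}) attractive as a compact reformulation of Keiding's formula.
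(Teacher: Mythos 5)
Your proposal is correct: the quotient-rule computation along the characteristic direction, using $(\partial_t+\partial_a)S=-(m_0+i)S$ together with the Lemma's transport equation $(\partial_t+\partial_a)C^*=-m_1^*C^*+iS$, is exactly the intended derivation (the paper itself only cites the result from \cite{brinks2015new}, where the argument proceeds in just this way). The algebra checks out, and your remark that the only substantive step is the reduction from the duration-dependent $C$-equation to the closed $C^*$-equation via $m_1^*$ correctly locates where the real work lies.
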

This Theorem is a generalized work from Brinks et al. \cite{brinks2013deriving} and \cite{brinks2014age}. If we assume that the mortality rate $m_1$ is independent from the duration time $d$, i.e., $m_1=m_1^*$, we obtain
\begin{align*}
	(\partial_t+\partial_a)p=(1-p)(i-p(m_1-m_0))\,.
\end{align*}

Let us assume that the mortality rates without ($m_0$) and with disease ($m_1$) are not known. Then, using the general mortality $m$ of the overall population  $m = p\, m_1^*+(1-p)m_0$ and the relative mortality $R=\nicefrac{m_1^*}{m_0}$, Equation \eqref{preralph} reads
\begin{equation}\label{preralph2}
	(\partial_t+\partial_a)p=(1-p)\Big(i-m\frac{p(R-1)}{p(R-1)+1}\Big)\,.
\end{equation}

The advantages of the PDE \eqref{preralph2} become evident when compared to Keiding's formula \eqref{keiding}. Specifically, if information about $m_0$ and $m_1$ is unavailable, and instead, general mortality $m$ and relative mortality $R$ are provided, the calculation can seamlessly proceed to determine the prevalence $p$ using the associated PDE formula.\\
By comparing equations (\ref{keiding}) and (\ref{preralph}), it can be realized that the PDE formula is sometimes more practicable for estimating the age-specific incidence rate $i$, because Equation \eqref{preralph} can directly be solved for the incidence. Given the prevalence $p$ (with its derivative) and mortality rates $m_1^*$ and $m_0$, we obtain
\begin{equation*}
i=\frac{(\partial_t+\partial_a)p}{1-p}+p(m_1^*-m_0)\,.
\end{equation*}
This formula is useful to approximate the incidence based on the data from two cross-sectional studies. In \cite{brinks2015new}, Algorithm 2.1, gives a detailed explanation about the obtained incidence from two cross-sectional studies.\\
In a study conducted by Brunet and Struchiner \cite{brunet1999non}, they formulated a PDE formula linked to prevalence odds $\pi$
\begin{align*}
	(\partial_t+\partial_a)\pi=(i-(m_1-m_0))\pi+i\,.
\end{align*}

Much like Keiding's formula, the applicability of this PDE formulation is undermined when essential information about $m_0$ and $m_1$ is unavailable. Another notable strength of formula (\ref{preralph}) lies in its ability to deal with migration. Unlike Keiding and Brunet-Struchiner, who do not account for migration, Equations \eqref{preralph} and \eqref{preralph2} can incorporate it seamlessly \cite{brinks2014age}.

In the following subsection, we will extend Keiding's formula and bring it to a simpler form.

\subsection{Prevalence odds and Keiding's formula}

According to the definition, the prevalence odds of the prevalence $p$ at age $a$ and time $t$ is
\begin{align}
	\pi(t,a)=\displaystyle\frac{p(t,a)}{1-p(t,a)}\,.
\end{align}
By the following theorem, a new formula will be given for the prevalence odds $\pi$,
\begin{thm}\label{theorem}
	The prevalence odds $\pi$ at age $a$ and time $t$ can be conceptualized as pseudo-convolution $\pi = i \circledast Y$, which means
	\begin{align}\label{pifinal}
		\pi(t,a)=\int_0^a i(t-\delta,a-\delta) \; Y_{t,a}(\delta)\; d\delta\,,
	\end{align}
	where
\begin{align*}
	Y_{t,a}(\delta):=\exp\Big(-\int_{0}^{\delta}m_1(t-\delta+\tau,a-\delta+\tau, \tau)-(i+m_0)(t-\delta+\tau,a-\delta+\tau)d\tau\Big)\,.
\end{align*}
\end{thm}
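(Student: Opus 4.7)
The plan is to derive the claimed formula \eqref{pifinal} as a direct algebraic rewriting of Keiding's formula \eqref{keiding}, via a single change of variables and one use of the identity $\exp(A)/\exp(B) = \exp(A-B)$. No new PDE analysis is needed.

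First I would pull the denominator $\mathcal{M}_{t,a}(a)$ inside the integral of \eqref{keiding} and combine it with the factor $\mathcal{M}_{t,a}(y)$, giving
\begin{align*}
\frac{\mathcal{M}_{t,a}(y)}{\mathcal{M}_{t,a}(a)}=\exp\Bigl(\int_{y}^{a}(m_0+i)(t-a+\tau,\tau)\,d\tau\Bigr).
\end{align*}
Thus Keiding's formula can be written
\begin{align*}
\pi(t,a)=\int_0^a i(t-a+y,y)\exp\!\Bigl(\int_y^a (m_0+i)(t-a+\tau,\tau)\,d\tau-\int_y^a m_1(t-a+\tau,\tau,\tau-y)\,d\tau\Bigr)dy.
\end{align*}

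Next I substitute $\delta:=a-y$ (so $dy=-d\delta$, with limits swapping to give $\int_0^a d\delta$). The incidence factor becomes $i(t-\delta,a-\delta)$ immediately. For each of the two inner integrals I then substitute $\tau\mapsto\tau'+(a-\delta)$, which maps $\tau\in[y,a]=[a-\delta,a]$ onto $\tau'\in[0,\delta]$. Under this substitution the arguments $(t-a+\tau,\tau)$ transform to $(t-\delta+\tau',a-\delta+\tau')$, and the duration argument $\tau-y$ inside $m_1$ becomes exactly $\tau'$. This converts both inner integrals into the $\int_0^\delta$ form appearing in the definition of $Y_{t,a}(\delta)$.

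Combining the two exponentials produces $\exp\bigl(-\int_0^\delta[m_1-(i+m_0)]\,d\tau\bigr)=Y_{t,a}(\delta)$, and the formula $\pi=i\circledast Y$ follows. The only delicate step is keeping track of the three arguments of $m_1$ during the substitution and verifying that the duration slot lines up correctly with $\tau'$; the rest is bookkeeping.
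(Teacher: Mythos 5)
Your proposal is correct and follows essentially the same route as the paper's proof: rewrite the ratio $\mathcal{M}_{t,a}(y)/\mathcal{M}_{t,a}(a)$ as a single exponential of $\int_y^a (m_0+i)$, substitute $\delta = a-y$ and shift the inner integration variable so that both inner integrals run over $[0,\delta]$ with arguments $(t-\delta+\tau', a-\delta+\tau')$ and duration slot $\tau'$, then combine the exponentials into $Y_{t,a}(\delta)$. Your version is if anything slightly more explicit than the paper's, which performs the $y\mapsto\delta$ substitution silently and only carries out the variable shift for the $(m_0+i)$ integral; the verification that the duration argument $\tau-y$ becomes $\tau'$ is exactly the ``delicate step'' you flag, and you handle it correctly.
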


Before we give a proof of the Theorem, we provide an interpretation of the pseudo-convolution formuala $\pi = i \circledast Y$: The prevalence-odds $\pi$ at the point $(t, a)$ is the $\delta$-integral of the incidence $i$ at $(t-\delta, a - \delta)$ weighted with $Y_{t,a}(\delta)$. $Y_{t,a}(\delta)$ is positive and for $m_1 - m_0 - i > 0$ monotonously decreasing in $\delta$. Hence, for increasing $\delta$ the incidence $i(t-\delta, a - \delta)$ looses its impact on $\pi(t, a).$

\begin{proof}
From Equation \eqref{keiding}, one can obtain the following formula for the prevalence odds
\begin{align}\label{preodd}
  		\pi(t,a)=\int_{0}^{a}i(t-\delta,a-\delta)\frac{\mathcal{M}_{t,a}(a-\delta)}{\mathcal{M}_{t,a}(a)}\exp(-\int_{0}^{\delta}m_1(t-\delta+\tau,a-\delta+\tau,\tau)d\tau)d\delta\,.
\end{align}
The calculations with details can be found in the Appendix. After simplifying the relation, we have 
\begin{equation*}
\frac{\mathcal{M}_{t,a}(a-\delta)}{\mathcal{M}_{t,a}(a)}=\exp\Big(\int_{a-\delta}^{a}m_0(t-a+\tau,\tau)+i(t-a+\tau,\tau)d\tau\Big)\,.
\end{equation*}

Hence,
\small{\begin{align}\label{pre-pi1}
\nonumber	\pi&(t,a)\\
	&=\int_{0}^{a}i(t-\delta,a-\delta)\exp\Big(\int_{a-\delta}^{a}m_0(t-a+\tau,\tau)+i(t-a+\tau,\tau)d\tau\Big)\exp(-\int_{0}^{\delta}m_1(t-\delta+\tau,a-\delta+\tau,\tau)d\tau)d\delta\,.
\end{align}}

By defining $\gamma:=\tau-a+\delta$, we have $\tau=\gamma+a-\delta$ and $d\tau=d\gamma$, which yields
\begin{align*}
	\exp\Big(\int_{a-\delta}^{a}m_0(t-a+\tau,\tau)+i(t-a+\tau,\tau)d\tau\Big)&=\exp\Big(\int_{0}^{\delta}(m_0+i)(t-a+\gamma+a-\delta,\gamma+a-\delta)d\gamma\Big)\\
	&=\exp\Big(\int_{0}^{\delta}(m_0+i)(t-\delta+\gamma,a-\delta+\gamma)d\gamma\Big)\,.
\end{align*}

Therefore, the formula (\ref{pre-pi1}) will turn to
\begin{align}\label{pi}
	\pi(t,a)
	=\tiny{\int_{0}^{a}i(t-\delta,a-\delta)\exp\Big(\int_{0}^{\delta}(m_0+i)(t-\delta+\gamma,a-\delta+\gamma)d\gamma\Big)\exp(-\int_{0}^{\delta}m_1(t-\delta+\tau,a-\delta+\tau,\tau)d\tau)d\delta}\,.
\end{align}
Then, the main result of this paper will be obtained as 
\begin{align*}
	\nonumber\pi(t,a)=\int_{0}^{a}i(t-\delta,a-\delta)Y_{t,a}(\delta)d\delta\,,
\end{align*}
where 
\begin{align*}
	Y_{t,a}(\delta):=\exp\Big(\int_{0}^{\delta}m_0(t-\delta+\gamma,a-\delta+\gamma)+i(t-\delta+\gamma,a-\delta+\gamma)-m_1(t-\delta+\gamma,a-\delta+\gamma,\gamma)d\gamma\Big)\,.
\end{align*}
\end{proof}

\bigskip

In epidemiology literature, sometimes it is assumed that the incidence is an exponential first order polynomial. The following Corollary is a special case of the above Theorem.
\begin{corollary}\label{corollary}
If the incidence function $i$ in Theorem \ref{theorem} is an exponential first order function
\begin{align*}
	i(t,a):=\exp(k_0+k_1a+k_2t)\,,
\end{align*}
with constants $k_0, k_1, k_2,$ then, the pseudo-convolution \eqref{pifinal} will turn to the following convolution
\begin{align}\label{convolution}
	\pi(t,a)=i^*(t,a)\Big(\text{Exp}*Y_{t,a}\Big)(\delta)\,,
\end{align}
where 
\begin{align*}
i^*(t,a)=\exp(k_0+k_1a-k_1t)\,,\qquad \text {Exp}(t-\delta)=\exp((k_1+k_2)(t-\delta))\,.
\end{align*}
\end{corollary}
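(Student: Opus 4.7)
The corollary reduces to pure algebra on top of Theorem \ref{theorem}; the only tool needed is the pseudo-convolution identity \eqref{pifinal}. My first step is to substitute the assumed exponential first-order form of $i$ into \eqref{pifinal}, obtaining
\begin{align*}
\pi(t,a) = \int_0^a \exp\bigl(k_0 + k_1(a-\delta) + k_2(t-\delta)\bigr)\, Y_{t,a}(\delta)\, d\delta.
\end{align*}
Because the exponent is affine in $(t,a,\delta)$, I would then split it into a $\delta$-independent piece and a piece depending only on $t-\delta$: writing $k_0+k_1(a-\delta)+k_2(t-\delta) = (k_0 + k_1 a - k_1 t) + (k_1+k_2)(t-\delta)$ exposes the prefactor $i^{*}(t,a) = \exp(k_0 + k_1 a - k_1 t)$ together with the kernel $\text{Exp}(t-\delta) = \exp((k_1+k_2)(t-\delta))$ exactly as stated in the corollary.

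Pulling the $(t,a)$-dependent factor out of the integral yields
\begin{align*}
\pi(t,a) = i^{*}(t,a)\int_0^a \text{Exp}(t-\delta)\, Y_{t,a}(\delta)\, d\delta,
\end{align*}
which is the product form claimed in \eqref{convolution}. To make the convolution structure explicit I would finish with the change of variables $u = t-\delta$: the range $\delta \in [0,a]$ becomes $u \in [t-a,\,t]$, and the integral reads
\begin{align*}
\int_{t-a}^{t} \text{Exp}(u)\, Y_{t,a}(t-u)\, du,
\end{align*}
which is precisely the (truncated) convolution $(\text{Exp} * Y_{t,a})$ evaluated at $t$, with $Y_{t,a}$ viewed as a function of its single free argument and the pair $(t,a)$ treated as fixed parameters.

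No genuine obstacle arises -- the whole proof reduces to one exponent-splitting identity, so I expect the main point to be clean bookkeeping rather than analysis. In particular, one must keep the parameter pair $(t,a)$ attached to $Y_{t,a}$ throughout, so that the resulting convolution is unambiguously understood as a convolution in the single scalar variable carrying $t-\delta$, and not as a convolution of two multivariate objects; additionally, the evaluation point in the notation $(\text{Exp}*Y_{t,a})(\cdot)$ should be read as $t$ (up to the range restriction to $[t-a,t]$) rather than the integration variable $\delta$.
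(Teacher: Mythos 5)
Your proof is correct and follows essentially the same route as the paper's: the identical exponent-splitting identity $k_0+k_1(a-\delta)+k_2(t-\delta)=(k_0+k_1a-k_1t)+(k_1+k_2)(t-\delta)$ yields the factorization $i(t-\delta,a-\delta)=i^{*}(t,a)\,\text{Exp}(t-\delta)$, after which the $\delta$-independent factor is pulled out of the integral. Your closing remark about the substitution $u=t-\delta$ and the proper reading of the evaluation point in $(\text{Exp}*Y_{t,a})(\cdot)$ is a sensible clarification of the paper's slightly loose notation, but does not change the argument.
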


\begin{proof}
See Appendix \ref{appencoro}.    
\end{proof}
The purpose is to apply the final definition of the prevalence odds in the simulation $R$-code. 

\section{Maximum Likelihood Estimation (MLE)}

\subsection{General remarks}

In our simulated data, we employ the maximum likelihood estimation (MLE) method as a key tool. MLE is a method utilized for estimating and determining parameter values based on observed data. After obtaining these parameter values, the likelihood function is maximized, revealing the most probable set of parameters.

We use the binomial distribution with following likelihood function
\begin{align*}
\mathcal{L}(\gamma|p)=\binom{n}{c}p^{c}(1-p)^{n-c}\,,
\end{align*}
where $\gamma$ is a vector detailed about later and 
\begin{align}\label{preandodd}
	p(\gamma)=\displaystyle\frac{\pi(\gamma)}{1+\pi(\gamma)}.
\end{align}

In this expression, $c$ and $n$ are the numbers of cases and the total number of people in the study (diagnosed plus not diagnosed), respectively. It is common practice to work with the logarithm ($\log$) of the likelihood function. To maximize the function, the first derivative of the logarithmized likelihood function is set to zero. The primary objective of MLE is to identify parameter values $\hat{\gamma}$ that maximize the likelihood function within an appropriately chosen parameter space $\Gamma$. In addition, by inversion of the Hessian matrix at the maximum (Fisher Information matrix), asymptotic confidence intervals can be estimated. The maximum likelihhood estimator $\hat{\gamma}$ is given by
\begin{align}
\hat{\gamma}:=\arg\max_{\gamma \in \Gamma}\log \mathcal{L}(\gamma|p)\,,
\end{align}
with $\Gamma$ being the parameter space. \\

\subsection{Simulation}
In a study by Brinks et al. \cite{brinks2014lexis}, an algorithm was introduced to simulate chronic diseases and compare the results with analytical calculations. This study also incorporated three essential time scales: calendar time, age, and the duration of the disease since its onset. The input data for their simulation included incidence ($i$), mortality without disease ($m_0$), and mortality with disease ($m_1$).

The simulation assumes that the mortality without disease $m_0(t,a)$, is defined as intensity of a Gompertz distribution
\begin{align*}
	m_0(t,a)=\exp(\xi_1+ \xi_2 \, a + \xi_3 \, t)
\end{align*}
with $\xi:=(-10.7, 0.1, \ln(0.998))$. By changing the variables $(t,a)$ to $(t-\delta +\tau, a-\delta+\tau)$ we have
\begin{equation*}
	m_0(t-\delta+\tau,a-\delta+\tau)=\exp\Big\{\xi_1+\xi_2\, (a-\delta)+\xi_3\,(t-\delta)+\tau \,(\xi_2+\xi_3)\Big\}\,.
\end{equation*}

Integrating the right hand side of the above relation gives
\begin{equation*}
\int_{0}^{\delta}\exp\Big\{\xi_1+\xi_2\, (a-\delta)+\xi_3\,(t-\delta)+\tau \,(\xi_2+\xi_3)\Big\}
=\frac{1}{\xi_2+\xi_3}\Big\{m_0(t,a)-m_0(t-\delta,a-\delta)\Big\}=:M_0(t,a,\delta).
\end{equation*}

The incidence rate in simulation is given by
\begin{align*}
	i(t,a)=\displaystyle\frac{(a-30)_+}{3000}\,,
\end{align*}
where $x_+$ means the positive part of $x$, $x_+:=\max(0,x)$. 
We define $I(t,a,\delta)$ to be
\begin{equation*}
	\frac{1}{3000}\int_{0}^{\delta}(a-\delta+\tau-30)_{+}d\delta=:I(t,a,\delta).
\end{equation*}
In the process of simulation, the mortality with disease $m_1$ is assumed to be factorized into
\begin{align}\label{mortalrate}
	m_1(t,a,d)=m_0(t,a)\,R(d)\,.
\end{align}

Here, $R(d)$  is defined by a quadratic function
	$R(d)=\gamma_1 \, (d-\gamma_2)^2+\gamma_3
$ with $\gamma = (0.04, 5, 1).$

The simulation encompasses 11 distinct age groups, ranging from 40 to 95 years old, with a total participant count of 74,388 individuals and a cross sectional study at $t=100$. The total number of participants and the instances of death are extracted from running Simulation 2 in \cite{brinks2014lexis}. 
\begin{table}[H]
	\caption{The table displays age groups along with the corresponding number $n_k$ of people alive in the study and the number $c_k$ of (prevalent) people in the \emph{Diseased} state.}
	\hspace{-1cm}
	\begin{center}
		\begin{tabular}{cccc}
		Index &	Age group & Nr. of people alive & Nr. of people with disease\\
         $k$ &    & $n_k$ & $c_k$\\
			\hline
		1&	40-44& 9858& 283\\
		2&	45-49& 9786&  501\\
		3&	50-54& 9597& 781\\
		4&	55-59& 9328& 1145\\
		5&	60-64& 8857& 1228\\
		6&	65-69& 8040& 1347\\
		7&	70-74& 6873& 1240\\
		8&	75-79& 5329&  997\\
		9&	80-84& 3706&  679\\
		10&	85-89& 2104&  370\\
		11&	90-94&  910&  164\\
			\hline
        &  & 74388&8735
		\end{tabular}
	\end{center}
\end{table}

Figure \ref{prevPlot}, shows the resulting age-specific prevalence odds at age $a$ and $t=100$. The blue line indicates the analytically calculated prevalence odds. The black circles show the simulated prevalence odds regarding to 11 different age groups:
\begin{figure}[H]
	\centering
	\includegraphics[width=12cm]{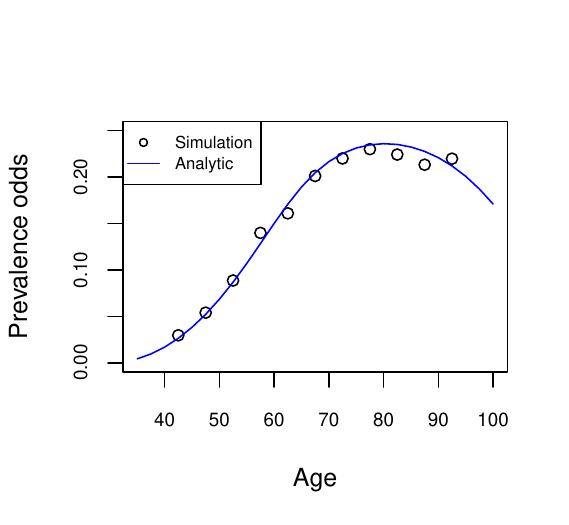}
	\caption {Age-specific prevalence odds $\pi$ for the ages between 30-100 years at year $t=100.$ The black circles are the prevalence odds based on the data in Table 1, the blue line is the numerical calculation using Equation \eqref{pifinal}.}\label{prevPlot}
\end{figure}

Assuming stochastic independence of the numbers in the 11 age groups in Table 1, the formulation of the likelihood function for our specific case reads:
\begin{align}\label{likelihood}
	\mathcal{L}(\gamma)=\prod_{k=1}^{11} \binom{n_k}{c_k} \, p_k \left (\gamma \right )^{c_k} \; \left ( 1-p_k \left (\gamma \right ) \right )^{n_k-c_k}\,,
\end{align}
Wherein, each time, the counts of diagnosed cases and the total number depend on the respective age group. Our goal is to estimate the parameters $\gamma = (\gamma_1, \gamma_2, \gamma_3)$ by maximizing the aforementioned likelihood function. The (asymptotic) confidence bounds for the components of $\gamma$ are obtained by inverting the Hesse-matrix \cite{wood2015core}. The results are given in Table 2.

\begin{table}[H] \label{restab}
	\caption{Input for the simulation and estimates with 95\% confidence intervals for each component of $\gamma$. }
	\hspace{-1cm}
	\begin{center}
		\begin{tabular}{c| c c l}
		   & Input & Point estimate & 95\% confidence interval\\
			\hline
	      $\gamma_1$ & 0.04 & 0.0330 & (-0.0127, 0.0787)\\
        $\gamma_2$ & 5    & 3.06   & (-5.70, 11.8) \\
        $\gamma_3$ & 1    & 1.01   & (0.625, 1.39)
		\end{tabular}
	\end{center}
\end{table}

\section{Conclusion}
In this paper, we examine and extend the prevalence odds formula proposed by Keiding \cite{keiding1991age}. While Keiding's formula is presented in a somewhat complex manner, potentially posing challenges for users in the epidemiology field, we have undertaken the task of simplifying it. Our aim is to make the formula more accessible for interested individuals within the epidemiology community.

To achieve this simplification, we introduce a unique pseudo-convolution formula, $\pi=i\circledast Y_{t,a}$, for the prevalence odds in Theorem \ref{theorem}. Notably, our theorem considers the incidence rate as an arbitrary function, offering flexibility in its application. Additionally, we present Corollary \ref{corollary}, where we specialize our approach by assuming the incidence rate to be a first-order exponential function. In this particular scenario, the pseudo-convolution $\pi$ transforms into a real convolution, $\pi(t,a)=i^*(t,a)(Exp*Y_{t,a})(\delta)$, enhancing the practicality and clarity of the formula.

One notable advantage of employing the prevalence odds $\pi = i \circledast Y_{t,a}$ lies in its ability to capture the potential dependency between mortality with disease $m_1$ and the incidence $i$, extending to potential covariates. This feature opens avenues for future research, encouraging exploration of additional covariates that may influence the relationship. For those interested in delving deeper into this aspect, we recommend referring to the work by Hoyer et al. \cite{hoyer2019risk}. Their study provides valuable insights that can contribute to a more comprehensive understanding of the topic.

A potential drawback of utilizing the pseudo-convolution $\pi=i \circledast Y_{t,a}$ may arise from its dependence on a comprehensive time period. In essence, to compute the prevalence odds, it is imperative to possess sufficient data regarding $m_0$, $m_1$, and $i$ spanning from birth (at $(t-a, 0)$) up to the current time point $(t,a)$. To illustrate, consider an individual aged 90 who is part of the study; in this scenario, information covering the entire 90-year period leading up to the present is required. Rarely, epidemiological data comprise so long periods. Indeed, considering a Partial Differential Equation (PDE) with appropriate initial or boundary conditions may be a more practical and efficient approach in case long-term data are not available. 

After formulating the prevalence odds, the subsequent step involved modeling through simulation. The study encompassed 11 distinct age groups. Utilizing R-code and Maximum Likelihood Estimation (MLE), coupled with the Fisher Information matrix, we successfully derived lower and upper confidence bounds. These bounds were instrumental in estimating the parameters associated with the identified mortality rate $R=\frac{m_1^*}{m_0}$.

There is another advantage of the MLE approach, which is not followed in depth here, but is mentioned for sake of completeness: general properties of maximum likelihood estimators can easily be applied to the situation described in this paper, i.e., estimator for $\gamma$ is unbiased, consistent and reaches the Cramer-Rao lower bound \cite[Chp. 4]{wood2015core}.

Intriguingly, in the simulated population, we only need the numbers $n_k$ and $c_k$ of people alive to make inference about the mortality rate ratio $R$. Thus, no observations of the numbers of deaths from the \emph{Health} or \emph{Diseased} state were necessary. In nuce, we can say, that we can infer information about mortality by looking at the people alive. Furthermore, the simulation approach holds promise for informing strategic interventions and services for individuals grappling with chronic diseases. By gaining insights into the risk factors through this methodology, it becomes possible to formulate effective strategies to enhance the well-being of the affected population.

\section{References}

\bibliographystyle{alpha}

\bibliography{keid}

\newcommand{\etalchar}[1]{$^{#1}$}
\begin{thebibliography}{BLFB{\etalchar{+}}14}

\bibitem[AN93]{ades1993modeling}
AE~Ades and DJ~Nokes.
\newblock Modeling age-and time-specific incidence from seroprevalence:
  toxoplasmosis.
\newblock {\em American Journal of Epidemiology}, 137(9):1022--1034, 1993.

\bibitem[BL14]{brinks2014age}
Ralph Brinks and Sandra Landwehr.
\newblock Age-and time-dependent model of the prevalence of non-communicable
  diseases and application to dementia in germany.
\newblock {\em Theoretical population biology}, 92:62--68, 2014.

\bibitem[BL15]{brinks2015new}
Ralph Brinks and Sandra Landwehr.
\newblock A new relation between prevalence and incidence of a chronic disease.
\newblock {\em Mathematical Medicine and Biology: a Journal of the IMA},
  32(4):425--435, 2015.

\bibitem[BLFB{\etalchar{+}}14]{brinks2014lexis}
Ralph Brinks, Sandra Landwehr, Rebecca Fischer-Betz, Matthias Schneider, and
  Guido Giani.
\newblock Lexis diagram and illness-death model: simulating populations in
  chronic disease epidemiology.
\newblock {\em PLoS one}, 9(9):e106043, 2014.

\bibitem[BLI{\etalchar{+}}13]{brinks2013deriving}
Ralph Brinks, Sandra Landwehr, Andrea Icks, Michael Koch, and Guido Giani.
\newblock Deriving age-specific incidence from prevalence with an ordinary
  differential equation.
\newblock {\em Statistics in Medicine}, 32(12):2070--2078, 2013.

\bibitem[BS99]{brunet1999non}
Robert~C Brunet and Claudio~J Struchiner.
\newblock A non-parametric method for the reconstruction of age-and
  time-dependent incidence from the prevalence data of irreversible diseases
  with differential mortality.
\newblock {\em Theoretical population biology}, 56(1):76--90, 1999.

\bibitem[Hel08]{helwich2008durational}
Marko Helwich.
\newblock Durational effects and non-smooth semi-markov models in life
  insurance.
\newblock In {\em Doctoral dissertation, University of Rostock}, 2008.

\bibitem[HKB19]{hoyer2019risk}
Annika Hoyer, Sophie Kaufmann, and Ralph Brinks.
\newblock Risk factors in the illness-death model: Simulation study and the
  partial differential equation about incidence and prevalence.
\newblock {\em Plos one}, 14(12):e0226554, 2019.

\bibitem[Kei91]{keiding1991age}
Niels Keiding.
\newblock Age-specific incidence and prevalence: a statistical perspective.
\newblock {\em Journal of the Royal Statistical Society Series A: Statistics in
  Society}, 154(3):371--396, 1991.

\bibitem[Woo15]{wood2015core}
Simon~N Wood.
\newblock {\em Core statistics}.
\newblock Cambridge University Press, 2015.

\end{thebibliography}

\begin{appendix}
\section{Appendix}
\subsection{Calculations for Theorem \ref{theorem}}
In this section, we will do the calculations of Theorem \ref{theorem}, step by step. By defining $$M_1(t,a,d):=\int_{0}^{d}m_1(t-d+\tau,a-d+\tau,\tau)d\tau\,,$$ we have
\begin{align}\label{pre}
\nonumber 1&-p(t,a)\\
\nonumber&=1-\frac{\int_{0}^{a}i(t-\delta,a-\delta)\mathcal{M}_{t,a}(a-\delta)\exp(-M_1(t,a,\delta))d\delta}{\mathcal{M}_{t,a}(a)+\int_{0}^{a}i(t-\delta,a-\delta)\mathcal{M}_{t,a}(a-\delta)\exp(-M_1(t,a,\delta))d\delta}\\
\nonumber&\scriptsize{=\frac{\mathcal{M}_{t,a}(a)+\int_{0}^{a}i(t-\delta,a-\delta)\mathcal{M}_{t,a}(a-\delta)\exp(-M_1(t,a,\delta))d\delta-\int_{0}^{a}i(t-\delta,a-\delta)\mathcal{M}_{t,a}(a-\delta)\exp(-M_1(t,a,\delta))d\delta}{\mathcal{M}_{t,a}(a)+\int_{0}^{a}i(t-\delta,a-\delta)\mathcal{M}_{t,a}(a-\delta)\exp(-M_1(t,a,\delta))d\delta}}\\
&=\frac{\mathcal{M}_{t,a}(a)}{\mathcal{M}_{t,a}(a)+\int_{0}^{a}i(t-\delta,a-\delta)\mathcal{M}_{t,a}(a-\delta)\exp(-M_1(t,a,\delta))d\delta}
\end{align}
Now we are ready to calculate $\pi$. By using the relation (\ref{pre}) we get
\begin{align*}
\pi(t,a)&=\displaystyle\frac{p(t,a)}{1-p(t,a)}\\&=\displaystyle\frac{\displaystyle\frac{\int_{0}^{a}i(t-\delta,a-\delta)\mathcal{M}_{t,a}(a-\delta)\exp(-M_1(t,a,\delta))d\delta}{\mathcal{M}_{t,a}(a)+\int_{0}^{a}i(t-\delta,a-\delta)\mathcal{M}_{t,a}(a-\delta)\exp(-M_1(t,a,\delta))d\delta}}{\displaystyle\frac{\mathcal{M}_{t,a}(a)}{\mathcal{M}_{t,a}(a)+\int_{0}^{a}i(t-\delta,a-\delta)\mathcal{M}_{t,a}(a-\delta)\exp(-M_1(t,a,\delta))d\delta}}\\
&=\displaystyle\frac{\int_{0}^{a}i(t-\delta,a-\delta)\mathcal{M}_{t,a}(a-\delta)\exp(-M_1(t,a,\delta))d\delta}{\mathcal{M}_{t,a}(a)}\,.
\end{align*}
Since $\mathcal{M}_{t,a}(a)$ is independent from $\delta$, we may take it under the integral. Therefore,
\begin{align*}
\pi(t,a)&=\int_{0}^{a}i(t-\delta,a-\delta)\frac{\mathcal{M}_{t,a}(a-\delta)}{\mathcal{M}_{t,a}(a)}\exp(-M_1(t,a,\delta))d\delta\,.
\end{align*}

Notice that
\begin{align*}
&\frac{\mathcal{M}_{t,a}(a-\delta)}{\mathcal{M}_{t,a}(a)}\\
&=\frac{\exp\Big(-\int_{0}^{a-\delta}m_0(t-a+\tau,\tau)+i(t-a+\tau,\tau)d\tau\Big)}{\exp\Big(-\int_{0}^{a}m_0(t-a+\tau,\tau)+i(t-a+\tau,\tau)d\tau\Big)}\\
&=\frac{\exp\Big(-\int_{0}^{a-\delta}m_0(t-a+\tau,\tau)+i(t-a+\tau,\tau)d\tau\Big)}{\exp\Big(-\int_{0}^{a-\delta}m_0(t-a+\tau,\tau)+i(t-a+\tau,\tau)d\tau-\int_{a-\delta}^{a}m_0(t-a+\tau,\tau)+i(t-a+\tau,\tau)d\tau\Big)}\\
&=\frac{\exp\Big(-\int_{0}^{a-\delta}m_0(t-a+\tau,\tau)+i(t-a+\tau,\tau)d\tau\Big)}{\exp\Big(-\int_{0}^{a-\delta}m_0(t-a+\tau,\tau)+i(t-a+\tau,\tau)d\tau\Big).\exp\Big(-\int_{a-\delta}^{a}m_0(t-a+\tau,\tau)+i(t-a+\tau,\tau)d\tau\Big)}\\
&=\frac{1}{\exp\Big(-\int_{a-\delta}^{a}m_0(t-a+\tau,\tau)+i(t-a+\tau,\tau)d\tau\Big)}
\end{align*}

Hence, 
\begin{align}\label{pre-pi}
\nonumber	\pi(t,a)&=\int_{0}^{a}i(t-\delta,a-\delta)\frac{1}{\exp\Big(-\int_{a-\delta}^{a}m_0(t-a+\tau,\tau)+i(t-a+\tau,\tau)d\tau\Big)}\exp(-M_1(t,a,\delta))d\delta\\
&=\int_{0}^{a}i(t-\delta,a-\delta)\exp\Big(\int_{a-\delta}^{a}m_0(t-a+\tau,\tau)+i(t-a+\tau,\tau)d\tau\Big)\exp(-M_1(t,a,\delta))d\delta\,.
\end{align}

\subsection{Calculations for Corollary \ref{corollary} }\label{appencoro}
By the definition 
\begin{align*}
	i(t,a):=\exp(k_0+k_1a+k_2t)\,,
\end{align*}
we have,
\begin{align*}i(t-\delta,a-\delta)&:=\exp(k_0+k_1(a-\delta)+k_2(t-\delta))=\exp(k_0+k_1a-k_1\delta+k_2t-k_2\delta)\\
&=\exp(k_0+k_1a+k_2t-(k_1+k_2)\delta)=\exp(k_0+k_1a+k_2t-k_1t+k_1t-(k_1+k_2)\delta)\\
&=\exp((k_0+k_1a-k_1t)+(k_1+k_2)(t-\delta))\\
&=\exp(k_0+k_1a-k_1t).\exp((k_1+k_2)(t-\delta))\\
&=i^*(t,a) \text {Exp}(t-\delta),
\end{align*}
Notice that the function $i^*(t,a)$ is independent from $\delta$.

Regarding to this situation we may consider $i^*(t,a)=i^*_T(t).i^*_A(a)$, where the letters $T$ and $A$, are presenting the two variables time and age, respectively \cite{ades1993modeling}. Hence, the convolution (\ref{convolution}) would be seen like
\begin{align*}
\pi(t,a)=i^*_T(t). i^*_A(a)\Big(\text{Exp}*Y_{t,a}\Big)(\delta)\,.
\end{align*}
Now we consider a case that the incidence function $i$, is any arbitrary function (not necessarily expressed by exponential function). It is possible to assume that the incidence function $i$ for the variables $(t-\delta,a-\delta)$ has the following expression
\begin{align*}
i(t-\delta,a-\delta)=i^*(t,a) J(t-\delta)\,,
\end{align*}
which leads us to 
\begin{align*}
\pi(t,a)=i^*(t,a)\Big(J*Y_{t,a}\Big)(\delta)\,.
\end{align*}
By assuming again that $i^*(t,a)=i^*_T(t).i^*_A(a)$, we will have 
\begin{align*}
\pi(t,a)=i^*_T(t).i^*_A(a)\Big(J*Y_{t,a}\Big)(\delta)\,.
\end{align*}
\end{appendix}
\end{document}